\newtheoremstyle{complement}
   {4ex}
   {4ex}
   {\scriptsize}
   {5ex}
   {\itshape}
   {.}
   { }
   {}
\theoremstyle{plain}
\newtheorem{theorem}{Theorem}[section]
\newtheorem{lemma}{Lemma}[section]
\newtheorem{property}[theorem]{Property}
\newcommand{\R}{\mathbb{R}}
\newcommand{\N}{\mathbb{N}}
\newcommand{\V}{\mathcal{V}}
\newcommand{\W}{\mathcal{W}}
\newcommand{\U}{\mathcal{U}}
\newcommand{\betapar}{\tilde{\beta}}
\begin{document}

\title{Conditional survival given covariates and marginal survival}

\author[1]{Roxane Duroux}
\affil[1]{Laboratoire de Statistique Th\'eorique et Appliqu\'ee,
 Universit\'e Pierre et Marie Curie, Paris, France\\}

\author[2]{C\'ecile Chauvel}
\affil[2]{Laboratoire Jean Kuntzmann, Universit\'e Joseph Fourier, Grenoble, France\\}

\author[1]{John O'Quigley}

\date{}

\maketitle

\begin{abstract}
Assuming some regression model, it is common to study the conditional distribution of survival given covariates. Here, we consider the impact of further conditioning, specifically conditioning on a marginal survival function, known or estimated. We investigate to what purposes any such information can be used in a proportional or non-proportional hazards regression analysis of time on the covariates. It does not lead to any improvement in efficiency when the form of the assumed proportional hazards model is valid. However, when the proportional hazards model is not valid, the usual partial likelihood estimator is not consistent and depends heavily on the unknown censoring mechanism. In this case we show that the conditional estimate that we propose is consistent for a parameter that has a strong interpretation independent of censoring. Simulations and examples are provided.
\end{abstract}

\textit{Key words:} Cox's proportional hazards model, estimating equation, Kaplan-Meier estimate, partial likelihood, time-varying effects, weighted score

\section{Introduction}

Prior to the development of the Cox proportional hazards model \citep{cox1972}, parametric models for carrying out regression on censored survival data enjoyed much success. Even the simple exponential model proved itself to be reliable and valuable in a broad range of situations including those in which the exponential assumption itself was no more than a rough approximation. Nonetheless, these parametric models were quickly almost entirely eclipsed by the arrival of the Cox model. The advantages of Cox regression can be seen on at least two levels. First the user is entirely freed of the need to consider plausible forms for the baseline hazard, \textit{i.e.} the conditional distribution of time, $T$, given a scalar or vector covariate $Z$. In addition, the Cox model immediately extends to much more involved situations such as time dependent covariates, multi-state processes and recurrent events. Furthermore, inference is invariant to monotonic increasing transformations on $T$ and efficiency remains high. In other words relatively little information has been lost by using this model that makes inference on the relative risk parameters easier. 

Suppose however that we would like to include in our analysis the marginal survival function of $T$, that we denote $S(t)$. There can be many situations in which we know something about the marginal distribution of $T$, prior to undertaking any regression analysis of $T$ on $Z$. We may have an accurate estimate of $S$, for example using registry data. We may wish to calibrate to some other study in which, by hypothesis, the mechanism governing the generation of the random variable $T$ is the same. Finally, prior to the regression analysis, we may decide to fit a marginal model to the distribution of $T$. This is the case of main interest here. Any unknown parameters for this marginal distribution are then replaced by sample based estimates. We then treat these estimated parameters as though they were fixed and known.

The question that we address in this paper is the extent to which any such undertaking is of value. In the light of previous works concerning the efficiency of the partial likelihood estimates, we do not anticipate any significant gains there \citep{struthers-kalbfleisch1986,lin1991,xu-oquigley2000}. One valuable result is that we can misspecify the model for $S$, failing to correctly model the distribution of $T$, and yet maintain consistent estimates for the relative risk parameters. In such a case there would be efficiency losses but these appear to be small. We might then argue that we are not risking much by the suggested approach. On the other hand, we can make very useful gains when the model for relative risk itself is misspecified. Suppose, for instance, that instead of a constant log relative risk, $\beta$, as supposed by the proportional hazards model, the observations are generated by a more general set-up in which $\beta$ changes over time. To make this more precise, we use the notation $\beta (t)$ \citep{murphy-sen1991}. The partial likelihood estimator will converge to a quantity depending in a very involved way on the unknown censoring mechanism, even when independent of both the failure mechanism and the covariate \citep{struthers-kalbfleisch1986}. This dependence is very strong and has been noticed by a number of authors. In situations of non-proportional hazards, in particular in cases of a smooth change through time, we would like to know if we are able to estimate $E[ \beta (T)]$ using the marginal survival. In fact, it turns out that this can be easily obtained by conditioning on the observed marginal survival estimate of $F(t)$.

The Cox proportional hazards model \citep{cox1972} allows for us to make inference on the regression coefficients of a relative risk model while keeping unspecified some baseline hazard rate, $\lambda_0$. A non-proportional hazards model, of which
Cox's model would be a special case, can be written,
\begin{eqnarray}\label{1.1}
\lambda(t|Z)=\lambda_0(t)\exp\{\beta(t) Z\},
\end{eqnarray}
and, in either case, the baseline hazard $\lambda_0$ is to be interpreted as the hazard $\lambda (t|Z=0)$. Model (\ref{1.1}) has been looked at by \citet{moreau-oquigley-mesbah1985,oquigley-pessione1989,oquigley-pessione1991,liang-self-liu1990,zucker-karr1990,murphy-sen1991,gray1992,hastie-tibshirani1993,verweij-houwelingen1995,lausen-schumacher1996,marzec-marzec1997}, and references therein.
The main emphasis of these papers was to estimate the regression effect $\beta(t)$ as a function of $t$. The goal of estimating $\beta(t)$ represents a considerable challenge since, in general situations, $\beta(t)$ is of infinite dimension. A less ambition although more readily achievable goal is to estimate the average effect $E[\beta(T)]$. It turns out that conditioning on marginal survival leads to an immediate solution to this problem. Estimation of an average effect can be used in a preliminary analysis of a data set with time varying regression effects.
Note that when using standard software such as SAS or R, the user may guess that the single estimate $\hat{\beta}$ in cases where $\beta(t)$ varies with $t$, corresponds to an average with respect to the variable $T$. This is in fact not true, because of the dependence on the censoring mechanism \citep{xu-oquigley2000}.

In Section \ref{Inference} we derive an estimate $\betapar$ based on the knowledge of the marginal survival function. When the proportional hazards assumption holds, this estimate is consistent for the ``true" regression parameter. The estimate is easy to compute in practice. We study the large sample properties of $\betapar$ and give the interpretation of $\beta^*$, to which $\betapar$ converges in probability, as a population average effect in Section \ref{LS_prop}. Simulations are provided in Section \ref{Simu}. The relative efficiency of $\betapar$ to the partial likelihood estimate under the proportional hazards model is studied in Section \ref{Rel-eff}. Finally Section \ref{Appli} gives an example of how $\betapar$ can be used in practice.

\section{Inference and conditional inference given $F(t)$}\label{Inference}

We denote $(T_i, C_i, Z_i(.))$ for $i \in \{1, \cdots , n\}$ a sequence of i.i.d.random variables with the same distribution as $(T,C,Z(.))$, where $T$ is the random variable of distribution function $F$ representing the failure time, $Z(.) \in \R^d$ is the covariate vector and $C$ is the censoring time, independent of $T$ given $Z(.)$. Let us assume that there exists $\tau >0$ such that $[0, \tau]$ is the support of $T$ and $C$. We also assume that $T$ and $Z$ follow model (\ref{1.1}) with parameter $\beta_0$. Let us define for all $i \in \{1, \cdots, n\}$, $X_i= \min(T_i,C_i)$ and $\Delta_i= I(T_i \leq C_i)$ so that $X_i$ is the observed time for the patient $i$ and $\Delta_i$ is the assigned status to this patient: ``died" ($=1$) or ``censored" ($=0$). The covariate of patient $i$ is observed until time $X_i$ but we extend its definition on $[0, \tau]$ by denoting $Z_i(s)=Z_i(X_i)$ for all $ s \in [X_i, \tau]$. Then we denote $\mathcal{D}$ the set of functions that are right continuous with left-hand limits from $[0, \tau]$ to $\R^d$, and we have $\beta_0 \in \mathcal{D}$. We define for all $i \in \{1, \cdots ,n\}$ and all $t \in [0, \tau]$, $Y_i(t)= I(X_i \geq t)$. The random process $Y_i(t)$ indicates whether the patient $i$ is still at risk at time $t$ ($=1$), or not ($=0$). Finally, we denote by $F_n$ the empirical distribution of $T$, \textit{i.e.} $F_n(t)=\frac{1}{n} \sum_{i=1}^n I(T_i \leq t)$, for all $t \in [0, \tau]$. Before introducing our estimator, we recall some estimation results in proportional and non-proportional hazards models.


For all $r \in \{0,1,2\}$ and $\beta \in \mathcal{D}$, let us define
\[
S^{(r)}(\beta,t)= \frac{1}{n} \sum_{j=1}^n Y_j(t) \exp \left\{ \beta'Z_j(t) \right\} Z_j(t)^{\otimes r},
\]
where for a column vector $v$, $v^{\otimes 2}$ is the matrix $vv'$, $v^{\otimes 1}$ the vector $v$ and $v^{\otimes 0}$ the scalar 1. We write the log partial likelihood \citep{cox1972} 
\[
l(\beta)= \sum_{i=1}^n \Delta_i \left[ \beta'Z_i(X_i) - \log \left\{ S^{(0)}(\beta,X_i) \right\} \right],
\]
and define
\[
E(\beta,t)= \frac{S^{(1)}(\beta,t)}{S^{(0)}(\beta,t)}, \ \ \ V(\beta,t)=\frac{S^{(2)}(\beta,t)}{S^{(0)}(\beta,t)}- E(\beta,t)^{\otimes 2}.
\]
The score $U$ is a function of the parameter $\beta$ defined by
\begin{eqnarray}\label{eq_estim_PL}
U(\beta)= \sum_{i=1}^n \Delta_i \left\{Z_i(X_i) - E(\beta,X_i) \right\} = \int \left\{ \mathcal{Z}(t) - E(\beta,t) \right\} d \bar{N}(t),
\end{eqnarray}
where $\bar{N}(t) = \sum_{i=1}^n \mathds{1}\{X_i \leq t, \ \Delta_i=1 \}$ and $\mathcal{Z}(.)$ is a left-continuous step function with discontinuities at the points $X_i$ where it takes the value $Z_i(X_i)$. Note that, in absence of censoring,
\begin{eqnarray}\label{no_cens}
U(\beta)= \int \left\{ \mathcal{Z}(t) - E(\beta,t) \right\} d \bar{N}(t)= \int \left\{ \mathcal{Z}(t) - E(\beta,t) \right\} dF_n(t).
\end{eqnarray}
This expression allows us to investigate the case when $F(t)$ is known. To see this, note that $F_n(t)$ is consistent for $F(t)$. To maximize the partial likelihood of model (\ref{1.1}) under the assumption that $\beta_0$ is constant, we solve the estimating equation $U(\beta)=0$. Thus we get a real consistent estimator $\hat{\beta}_{PL}$ of the function $\beta_0$. If $\beta_0$ is indeed a constant function, \textit{i.e.} $\beta_0(t)=\beta_0$ for all $t \in [0, \tau]$, \citet{andersen-gill1982} showed that $\sqrt{n}(\hat{\beta}_{PL} - \beta_0)$ is asymptotically normal with mean zero and variance consistently estimated by $\mathcal{I}^{-1}(\hat{\beta}_{PL})$ where $\mathcal{I}(\hat{\beta}_{PL})= n^{-1} \sum_{i=1}^n \Delta_i V(\hat{\beta}_{PL},X_i)$.
However, if the assumption of a constant $\beta_0$ fails, it is helpful to obtain some summary measure of the whole function $\beta_0(t)$, such as the mean effect, and study how to estimate this quantity. We notice that in the score function (\ref{eq_estim_PL}), all the terms in the sum have the same weight. We may choose to assign different weights. This is what we do for instance when using a weighted log-rank test \citep{gehan1965}. 

The idea is to replace $F_n$, which is only available in the absence of censoring, in (\ref{no_cens}) by $\tilde{F}$ which provides a consistent estimator of $F$ under some conditions and in the presence of censoring. This leads us to define the following weighted score function
\begin{eqnarray}\label{eq_estimW}
U_{W}(\beta)= \sum_{i=1}^n \Delta_i W(X_i) \left\{ Z_i(X_i)-E(\beta,X_i) \right\},
\end{eqnarray}
where $W(.)$ is a real $\left(\mathcal{F}_t \right)_{t>0}$ - predictable stochastic process with, for $t \in [0, \tau]$,
\[\mathcal{F}_t = \sigma ((X_i, \Delta_i, Z_i(s)); \ i : X_i \leq s, \ 0 \leq s \leq t).\]
Loosely speaking, $\left(\mathcal{F}_t \right)_{t>0}$ is the filtration that contains all the observed information before time $t$. With a specific choice of $W$, we can find
\begin{eqnarray}
U_{W}(\beta)= \int \left\{ \mathcal{Z}(t) - E(\beta,t) \right\} d \tilde{F}(t).
\end{eqnarray}
Let denote $\hat{\beta}_W$ the solution of the equation $U_W(\beta)=0$. We see that if we choose $W$ constant equal to $1$, then we find the usual score function (\ref{eq_estim_PL}) arising from the partial likelihood.

\section{Statistical properties of the estimator}\label{LS_prop}

There are two ways of approaching large sample inference: conditional and marginal inference. For the first of these, we suppress all uncertainty in the estimate of $S(t)$. Such an approach may be appropriate when we wish for the marginal survival to reflect some given reference population. The second approach, marginal inference, takes on board uncertainties in any prior estimation of $S(t)$. These uncertainties can be of the form of errors in estimates or Bayesian when specified via some prior distribution.

\subsection{Conditional inference}

Let us denote $\hat{S}$ the left-continuous version of the estimator of \citet{kaplan-meier1958} of the survival function $S=1-F$ of the failure time marginal distribution $T$. The function $\hat{S}$ is a consistent estimator for $S$ \citep{xu-oquigley2000}. We now choose the weighted function $W_{KM}$ in equation (\ref{eq_estimW}) such that, for all $t \in [0,\tau]$,
\[
W_{KM}(t)=\frac{\hat{S}(t)}{\sum_{i=1}^n Y_i(t)}=\frac{\hat{S}(t)}{nS^{(0)}(0,t)}.
\]
A calculation gives the formula $\hat{S}= \sum_{i:X_i \leq t} \delta_i W_{KM}(X_i)$, such that the weights $\delta_i W_{KM}(X_i)$ are the increments of $\hat{S}$. We see that $W_{KM}$ converges uniformly in $t$ to $w_{KM}$ where $w_{KM}(t) = S(t) / s^{(0)}(0,t)$. Let us denote $\hat{\beta}_{KM}$ the estimator defined by the estimating equation $U_{W_{KM}}(\beta)=0$.

To study the asymptotic behaviour of $\hat{\beta}_W$, we now consider the following assumptions, the first three of which are due to \citet{andersen-gill1982} ; \textit{
\begin{enumerate}[(A)]
\item (Finite interval) $\int_0^\tau \lambda_0(t) dt < \infty$.
\item (Asymptotic stability) There exists a neighborhood $\mathcal{B}$ of $\beta_0$ such that $0$ and $\beta_0$ are in the interior of $\mathcal{B}$, and there exist functions $s^{(r)}(\beta,t)$ defined on $\mathcal{B} \times [0,\tau]$ for $r \in \{0,1,2\}$ such that
\begin{eqnarray*}
\sup_{\beta \in \mathcal{B}, t \in [0,\tau]} \Vert S^{(r)}(\beta ,t)-s^{(r)}(\beta ,t) \Vert \overset{P}{\longrightarrow} 0.
\end{eqnarray*}
\item (Asymptotic regularity conditions) For all $r \in \{0,1,2\}$, the functions $s^{(r)}(\beta,t)$ are uniformly continuous in $t \in [0,\tau]$, continuous in $\beta \in \mathcal{B}$ and bounded on $\mathcal{B} \times [0,\tau]$; $s^{(0)}(\beta,t)$ is bounded and bounded away by zero.
\item (Asymptotic stability of $W$) There exists a nonnegative bounded function $w$ defined on $[0,\tau]$ such that
\begin{eqnarray*}
\sup_{t \in [0,\tau]} \vert nW(t)-w(t) \vert \overset{P}{\longrightarrow} 0.
\end{eqnarray*}
\item (Homoscedasticity) If we let
\[
\frac{s^{(1)}(\beta_0,t)}{s^{(0)}(\beta_0,t)}= E[Z(t) \vert T=t], \ \text{and} \ v(\beta_0,t) = \left. \frac{\partial}{\partial \beta} \frac{s^{(1)}(\beta,t)}{s^{(0)}(\beta,t)} \right\vert_{\beta = \beta_0} = \text{Var}[Z(t) \vert T=t],
\]
we assume that $v(\beta_0,t)$ is constant in time.
\end{enumerate}
}

Let us take some other notations: let us define $\beta_w^*$ as the unique solution of the equation 
\begin{eqnarray}\label{eq_asymp}
h_w(\beta)= \int_0^{\tau} w(t) \left\{ \frac{s^{(1)}(\beta_0,t)}{s^{(0)}(\beta_0,t)} - \frac{s^{(1)}(\beta,t)}{s^{(0)}(\beta,t)} \right\} s^{(0)}(\beta_0,t)dt=0,
\end{eqnarray}
and $A_w(\beta)$ as 
\[
A_w(\beta)= \int_0^{\tau} w(t) \left\{ \frac{s^{(2)}(\beta,t)}{s^{(0)}(\beta,t)}- \left( \frac{s^{(1)}(\beta_0,t)}{s^{(0)}(\beta_0,t)} \right)^{\otimes 2} \right\} s^{(0)}(\beta_0,t)dt.
\]

\citet{lin1991} showed that
\begin{property}\label{Lin91}
Under model (\ref{1.1}) with parameter $\beta_0$ and the assumptions $(A)$, $(B)$, $(C)$ and $(D)$, $\hat{\beta}_W \overset{P}{\longrightarrow} \beta^*_w$ if $A_w(\beta^*_w)$ is positive definite.
\end{property}
In particular, choosing $w(t)=1$, we get that $\hat{\beta}_{PL}$ converges to the solution of the equation
\begin{eqnarray}\label{eq_asympPL}
\int_0^{\tau} \left\{ \frac{s^{(1)}(\beta_0,t)}{s^{(0)}(\beta_0,t)} - \frac{s^{(1)}(\beta,t)}{s^{(0)}(\beta,t)} \right\} s^{(0)}(\beta_0,t) \lambda_0(t)dt=0.
\end{eqnarray}
In general, this solution depends on the censoring through the term $s^{(0)}(\beta_0,t)$ \citep{struthers-kalbfleisch1986}. Therefore, the results of the estimation stemming from equation (\ref{eq_estimW}) have to be read with care and their interpretation is not straightforward in general situations.

Using Property \ref{Lin91}, we have the following result, due to \citep{xu-oquigley2000},
\begin{property}\label{Xu96}
Under model (\ref{1.1}) of parameter $\beta_0$ and under the assumptions $(A)$, $(B)$ and $(C)$, if $A_{w_{KM}}(\beta^*)$ is positive definite, we have $\hat{\beta}_{KM} \underset{n \to \infty}{\overset{P}{\longrightarrow}} \beta_{w_{KM}}^*$.
\end{property}
For sake of simplicity, let us denote $\beta^*=\beta_{w_{KM}}^*$. The parameter $\beta^*$ is the unique solution of the equation $h_{w_{KM}}(\gamma)=0$,
\textit{i.e.} the equation
\begin{eqnarray}\label{eq_asympKM}
\int_0^{\tau} \left\{ \frac{s^{(1)}(\beta_0,t)}{s^{(0)}(\beta_0,t)} - \frac{s^{(1)}(\beta,t)}{s^{(0)}(\beta,t)} \right\} dF(t)=0.
\end{eqnarray}
None of the quantities in Equation (\ref{eq_asympKM}) involves the censoring mechanism, and so, its solution $\beta^*$ is independent of the censoring. 

Now, let us show that $\beta^*$ can be interpreted as the average effect of the regression function $\beta_0$.
We apply a Taylor series approximation to (\ref{eq_asympKM}) and we get
\[
\int_0^\tau \text{Var}[Z(t) \vert T=t] \{\beta^*-\beta_0(t) \}dF(t) \approx 0,
\]
which can be written directly in terms of $\beta^*$ as
\[
\beta^* = \frac{\int_0^\tau \text{Var}[Z(t) \vert T=t] \beta_0(t)dF(t)}{\int_0^\tau \text{Var}[Z(t) \vert T=t] dF(t)} = \frac{\int_0^\tau v(t) \beta_0(t)dF(t)}{\int_0^\tau v(t) dF(t)}.
\]
$\beta^*$ is a weighted average effect of the regression coefficient with weights proportional to $v(.)$. Now, let us make a first approximation that $v(.)$ is nearly constant. Then we obtain
\begin{property} Under the homoscedasticity assumption $(E)$,
\begin{eqnarray}\label{esp_beta}
\beta^* \approx \frac{\int_0^\tau \beta_0(t) dF(t)}{F(\tau)} = \int_0^\tau \beta_0(t) dF(t) = E[\beta_0(T)].
\end{eqnarray}
\end{property}
We recall that $F(\tau)=1$ because $[0, \tau]$ is the support of $T$. In the case of the \citet{harrington-fleming1982} models, the equation (\ref{esp_beta}) holds exactly. To see this, consider a relation between $\beta^*$ and a vector $\alpha$ measuring group differences in $k$-sample transformation models when the random error belongs to the $G^\rho$ family of \citet{harrington-fleming1982}. Note that, such a transformation model can be written $h(T)=\alpha'Z + \varepsilon$ where $h$ is an increasing function, $Z$ a vector with values in $\{0,1\}$ and $\varepsilon$ belongs to the $G^\rho$ family. The survival function of $Z$ is defined by
\[
\left\{
\begin{matrix}
H_0(t)=\exp(-e^t) & (\rho=0)\\
H_\rho(t)=(1+\rho e^t)^{1/\rho} & (\rho>0).
\end{matrix}
\right.
\]
The non-proportional hazards model and the above transformation model are equivalent with $h(t)=\log \Lambda_0(t)$ where $\Lambda_0$ is the cumulative baseline hazard and $\alpha=\beta_0(0)$. \citet{xu-harrington2001} showed that, in the two-sample case, with $\rho=1$, $h=\log$ and equal probabilities of group membership, \textit{i.e.} $P(Z=1)=1-P(Z=0)=1/2$, we have
\[
E[\beta_0(T)]=-\frac{\alpha}{2}=\beta^*,
\]
where the first equality is straightforward and the second one uses the fact that
\[
E[Z \vert T=t]=\frac{s^{(1)}(\beta_0,t)}{s^{(0)}(\beta_0,t)}.
\]
In this way we can consider $\beta^*$, as a first approximation, to an average effect of the regression coefficient.

The estimator $\hat{\beta}_{KM}$ presents better performances than those of the estimators $\hat{\beta}_{PL}$ and $\hat{\beta}_W$ in the case of non proportional hazards. In addition, since this estimator does not depend on censoring, its use in pratical situations is justified and its interpretation is clear. However the jumps of the Kaplan-Meier estimator are large for large survival times when there is significant censoring. This is implied by  a decrease in the size of the set of individuals at risk. In this way, we put more weight on delayed observations in the estimating equation. Often, the last increments are particularly noisy \citep{stute-wang1993}. 

One way to get around these problems is to use smooth weights in the estimating equation (\ref{eq_estimW}) by appealing to a parametric model for $T$. Let us define the parametric model $\mathcal{M}=\{P_\theta , \theta \in \Theta \}$ where $\Theta \subset \R^p$ and $P_\theta$ is a continuous distribution function for all $\theta$. Let us assume that the distribution function of $T$ is $P_{\theta_0}$, with a fixed $\theta_0 \in \Theta$. Then we denote $S=S_{\theta_0}=1-P_{\theta_0}$ its survival function. We denote $\hat{\theta}_n$ a consistent estimator of $\theta_0$, \textit{i.e.} we have the convergence
\begin{equation}\label{conv_estimateur}
\hat{\theta}_n \overset{P}{\longrightarrow} \theta_0.
\end{equation} 
Let $T_n$ denote a random variable with distribution function $P_{\hat{\theta}_n}$ and $S_n=1-P_{\hat{\theta}_n}$ its survival function. Let us define the weight based on the parametric model
\[
W_{p}(t)=\frac{S_n(t)}{nS^{(0)}(0,t)}.
\]
We obtain the following result.

\begin{theorem}\label{NPH_bien_spe}
Under model (\ref{1.1}) with parameter $\beta_0$ and under the assumptions $(A)$, $(B)$ and $(C)$, if $A_w(\beta^*)$ is positive definite, then $\betapar \overset{P}{\longrightarrow} \beta^*$, where $\betapar$ is the solution of the equation $U_{W_{p}}(\beta)=0$.
\end{theorem}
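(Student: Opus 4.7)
The plan is to reduce Theorem \ref{NPH_bien_spe} to Property \ref{Lin91} by checking that the parametric weight $W_p$ satisfies assumption (D) with the \emph{same} limiting function as the Kaplan--Meier weight, namely $w_{KM}(t) = S(t)/s^{(0)}(0,t)$. Since that limit is precisely the one driving Property \ref{Xu96}, the resulting limiting equation $h_{w_{KM}}(\beta) = 0$ is (\ref{eq_asympKM}), whose unique solution under the positive definiteness hypothesis is, by definition, $\beta^*$.

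The crux is therefore the uniform convergence
\[
\sup_{t \in [0,\tau]} \left| nW_p(t) - w_{KM}(t) \right| = \sup_{t \in [0,\tau]} \left| \frac{S_n(t)}{S^{(0)}(0,t)} - \frac{S(t)}{s^{(0)}(0,t)} \right| \overset{P}{\longrightarrow} 0.
\]
I would decompose the integrand as
\[
\frac{S_n(t) - S(t)}{S^{(0)}(0,t)} \;+\; S(t)\,\frac{s^{(0)}(0,t) - S^{(0)}(0,t)}{S^{(0)}(0,t)\, s^{(0)}(0,t)}.
\]
The second term is handled by (B), giving uniform convergence of $S^{(0)}(0,\cdot)$ to $s^{(0)}(0,\cdot)$, together with (C), which keeps $s^{(0)}(0,\cdot)$ (and hence, with probability tending to one, $S^{(0)}(0,\cdot)$) uniformly bounded away from zero, while $0 \le S \le 1$. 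For the first term, one needs $\sup_t |S_n(t) - S(t)| \overset{P}{\to} 0$. Pointwise, this follows from (\ref{conv_estimateur}) together with continuity of $\theta \mapsto P_\theta(t)$; the upgrade to uniform convergence on the compact interval $[0,\tau]$ then follows by Polya's theorem, using the monotonicity of $S_n$ and the continuity of the monotone limit $S$. Dividing by $S^{(0)}(0,t)$ preserves uniformity thanks once more to (C).

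With (D) established for $W_p$ and limit $w_{KM}$, Property \ref{Lin91} applies directly to the weighted score $U_{W_p}$ and yields $\betapar \overset{P}{\longrightarrow} \beta_{w_{KM}}^* = \beta^*$. The positive definiteness of $A_w(\beta^*)$ assumed in the statement is exactly what Property \ref{Lin91} requires to ensure that $h_{w_{KM}}$ has a unique zero and that $\betapar$ indeed converges to that zero, so no extra work is needed there.

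The main obstacle I anticipate is the step $\sup_t |S_n(t) - S(t)| \overset{P}{\to} 0$: the consistency (\ref{conv_estimateur}) of $\hat{\theta}_n$ alone does not give even pointwise convergence without a continuity assumption on the parametric family $\mathcal{M}$, which is implicit in the excerpt rather than explicit. Under the natural hypothesis that $\theta \mapsto P_\theta(t)$ is continuous for each $t$, Polya's theorem bridges pointwise and uniform convergence, and everything else is routine bookkeeping on top of (B) and (C).
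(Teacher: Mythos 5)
Your proposal is correct and takes essentially the same route as the paper: the paper likewise reduces the theorem to Property \ref{Lin91} by verifying assumption $(D)$ for $W_p$ with limit $w_{KM}$, obtaining $\sup_t|S_n(t)-S_{\theta_0}(t)|\overset{P}{\to}0$ from a probabilistic Dini/Polya argument (its Lemma \ref{Dini_prob}) and combining it with assumption $(B)$ and the boundedness conditions in $(C)$. The gap you flag --- that consistency of $\hat{\theta}_n$ yields pointwise convergence only under an implicit continuity of $\theta\mapsto P_\theta(t)$ --- is present in the paper's proof as well.
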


\begin{proof}
It is sufficient to show that $W_{p}$ satisfies the assumption $(D)$ to obtain the result using Property \ref{Lin91}. We recall the following lemma which we make use of.

\begin{lemma}\label{Dini_prob}(Probabilist Dini theorem)\\
Let $(X_n(t))_{n \in \N}$ denotes a sequence of almost surely real continuous non-decreasing processes defined on $I=[a,b] \subset \R$.\\
If, for all $t \in I$, $X_n(t) \overset{P}{\rightarrow} x(t)$, when $n$ tends to infinity, where $x$ is a continuous function on $I$, then 
\[
\sup_{t \in I} \Vert X_n(t) - x(t) \Vert \overset{P}{\rightarrow} 0.
\]
\end{lemma}

Since $w$ is continuous, it is uniformly continuous on $[0,\tau]$.
Then we show that 
\[
\sup_{t \in [0,\tau]} \vert nW_{p}(t)-w(t) \vert \overset{P}{\longrightarrow} 0,
\]
thanks to
\begin{equation}\label{conv_S_n}
\sup_{t \in [0,\tau]} \vert S_n(t)-S_{\theta_0}(t) \vert \overset{P}{\longrightarrow} 0
\end{equation}
\begin{equation}\label{conv_s^0}
\sup_{t \in [0,\tau]} \vert S^{(0)}(0 ,t)-s^{(0)}(0 ,t) \vert \overset{P}{\longrightarrow} 0,
\end{equation}
and the boundedness conditions of assumption $(C)$, where (\ref{conv_S_n}) comes from Lemma \ref{Dini_prob} and (\ref{conv_s^0}) from assumption $(B)$.
Finally, $W_{p}$ satisfies $(D)$ and we can use Property \ref{Lin91}, which ends the proof of Theorem \ref{NPH_bien_spe}.
\end{proof}

We have estimated $\beta^*$ with different weights than $W_{KM}$ in equation (\ref{eq_estimW}), using our knowledge of the marginal survival. This parameter is of interest because of its lack of dependence on the censoring mechanism and its closeness to $E[\beta_0(T)]$.

One can see that $\hat{\theta}_n$ satisfies Equation (\ref{conv_estimateur}) when the marginal distribution of $T$ is well-specified. In this case, the parameter $\theta_0$ of the marginal distribution of $T$ can be estimated by maximizing the likelihood or by the method of moments before focusing our interest on the consistent estimation of $\beta^*$ consequent upon Theorem \ref{NPH_bien_spe}. As mentioned previously, it may be of interest to consider a smooth estimator of the survival function $S$ of $T$, which is not too noisy for late observed times. Finally, if we have information from previous studies, it is reasonable to use this information for the estimation. Theorem \ref{NPH_bien_spe} ensures that we can adjust the weights in the score function (\ref{eq_estimW}) with $W_{p}$, and then the estimator converges to $\beta^*$, which we take to be an estimate of average effect.

\subsection{Marginal inference}

Let $g$ be the density of the prior distribution on $\theta$. This distribution can summarize errors of estimation on $\theta$ or be a Bayesian prior on this variable. We assume that $\Theta$ is a compact set of $\R^p$. We denote $g_n$ the density of $\theta_n$ which is the $n$-th bayesian estimation of $\theta$, using the likelihood. For each fixed $\theta$, we can consider that $T$ follows the distribution $F(t;\theta)$ and we define
\begin{eqnarray}\label{score_theta_n}
U_n(\theta, \beta) &=& \int \left\{ \mathcal{Z}(t) - E(t,\beta) \right\} dF_n(t,\theta)\\
&=& \sum_{i=1}^n \Delta_i \frac{S(X_i;\theta)}{nS^{(0)}(0,X_i)} \left\{ Z_i(X_i) - E(\beta,X_i) \right\},
\end{eqnarray}

where $S(t;\theta)=1-F(t;\theta)$. We assume that $S(t;.)$ is continuously differentiable for all $t \in [0,\tau]$ and bounded away by zero. For each $\theta$, we can apply Property \ref{Lin91} and define the sequence $(\beta_{n,\theta})_n$ such that for all $n \in \N$, $U_n(\theta,\beta_{n,\theta})=0$ and $\beta_{n,\theta} \overset{P}{\rightarrow} \beta_\theta$, where $\beta_\theta$ is the solution of the equation
\begin{equation}\label{eq-lim-theta}
h_\theta(\beta) = \int_0^\tau w_\theta(t) \left\{ \frac{s^{(1)}(\beta_0,t)}{s^{(0)}(\beta_0,t)} - \frac{s^{(1)}(\beta,t)}{s^{(0)}(\beta,t)} \right\} s^{(0)}(\beta_0,t)dt=0,
\end{equation}
with $w_\theta(t)=S(t;\theta)/s^{(0)}(0,t)$. We use then Lemma \ref{TFI_prob} written in Appendix.

It allows us to consider the $\mathcal{C}^1$-process $\left(\beta_n(\theta) \right)_n$ defined on $\Theta$ such that, for all $\theta \in \Theta$, $U_n(\theta,\beta_n(\theta))=0$ and $\beta_n(\theta) \overset{P}{\rightarrow} \beta(\theta)$. Besides we have, using \eqref{diff-fct-imp} in Lemma \ref{TFI_prob},
\begin{eqnarray}\label{deriv-beta}
\beta_n'(\theta)=-\left(\frac{\partial U_n}{\partial \beta} (\theta, \beta_n(\theta)) \right)^{-1} \frac{\partial U_n}{\partial \theta} (\theta, \beta_n(\theta)).
\end{eqnarray}

We can study the two partial derivatives in (\ref{deriv-beta}). On one hand, we have
\begin{eqnarray*}
\frac{\partial U_n}{\partial \theta}(\theta, \beta_n(\theta)) = \sum_{i=1}^n  \frac{\Delta_i}{nS^{(0)}(0,X_i)} \frac{\partial S}{\partial \theta}(X_i;\theta) \left\{ Z_i(X_i) - E(\beta_n(\theta),X_i) \right\}.
\end{eqnarray*}

Assumptions $(B)$, $(C)$ and the $\mathcal{C}^1$-differentiability of $S(t;.)$ imply that, at least for $n$ sufficiently high, $\partial U_n/\partial \theta$ is bounded by a constant independent of $n$. On the other hand,

\begin{eqnarray*}
\frac{\partial U_n}{\partial \beta}(\theta, \beta_n(\theta))
= - \sum_{i=1}^n \Delta_i \frac{S(X_i; \theta)}{n S^{(0)}(0,X_i)} V(\beta_n(\theta),X_i).
\end{eqnarray*}

Assumptions $(B)$, $(C)$, $(E)$, the fact that $S(t,.)$ is bounded away by zero and the strong law of large number applied on $\left(\Delta_i \right)_{i \in \{1, \cdots, n \}}$ imply that, at least for $n$ sufficiently high, $\partial U_n/\partial \beta$ is bounded and majored by a negative constant independent of $n$. So, with (\ref{deriv-beta}), $\beta_n'(.)$ is almost surely bounded by a constant $K>0$ independent of $n$. We have then that $\beta_n$ is almost surely uniformly Lipschitz continuous and so 
\begin{eqnarray}\label{conv-unif-beta}
\Vert \beta_n - \beta \Vert_{\infty} \overset{P}{\longrightarrow} 0.
\end{eqnarray}

By definition of $\theta_n$, we have
\begin{eqnarray}\label{conv-theta-bayes}
\theta_n \overset{P}{\longrightarrow} \theta_0.
\end{eqnarray}
If we combine (\ref{conv-theta-bayes}) and (\ref{conv-unif-beta}) with a triangle inequality, we obtain
\begin{eqnarray}\label{conv-bayes}
\beta_n(\theta_n) \overset{P}{\longrightarrow} \beta(\theta_0)=\beta^*.
\end{eqnarray}
This convergence enables us to consider a bayesian inference for the study of an average effect of $\beta_0$.

In practice, at step $n$, we begin with the bayesian estimation of $\theta_n$. Then we estimate $\beta$ using the equation $U(\theta_n,\beta)=0$ where
\begin{eqnarray*}
U(\theta_n, \beta) = \sum_{i=1}^n \Delta_i \frac{S(X_i;\theta_n)}{nS^{(0)}(0,X_i)} \{ Z_i(X_i) - E(\beta,X_i) \}.
\end{eqnarray*}
This equation can be solved with the Newton-Raphson method. Equation (\ref{conv-bayes}) implies that this process will converge to $\beta^* \approx E[\beta(T)]$.

\section{Simulations}\label{Simu}

Simulations were carried out to study $\betapar$ and compare it with the partial likelihood estimator $\hat{\beta}_{PL}$ and the Kaplan-Meier estimator $\hat{\beta}_{KM}$.
First, we study some cases under proportional hazards models, \textit{i.e.} a model with regression parameter $\beta_0(t)=\beta_0$ for all $t \in [0,\tau]$ ; then some cases under non-proportional hazards models. More specifically, we consider changepoint models with a piecewise constant regression coefficient : $\beta_0(t)= \beta_1 I(t < t_0) + \beta_2 I(t \geq t_0)$, for all $t \in [0, \tau]$. The sample size is $n=1500$. Some of these simulations are presented in this paper and many others can be provided by the authors. The failure time $T$ follows an exponential distribution of parameter 2. The covariate $Z$ follows a uniform distribution on $[0,1]$. The censoring time $C$ follows a uniform distribution on $[0,t_c]$ (Table \ref{tab1} and Table \ref{tab2}) or an exponential distribution of parameter $t_c$ (Table \ref{tab3}), where $t_c$ is set to fix the percentage of censoring. We carried out 500 simulations for each case and computed the empirical means and standard errors of the estimators.

\begin{table}[!h]
\centering
\caption{Comparison of $\hat{\beta}_{PL}$, $\hat{\beta}_{KM}$ and $\betapar$ under proportional hazards model. $C \sim \mathcal{U}[0,t_c]$. Standard errors in parenthesis.}
\label{tab1}
\begin{tabular}{|c|c|c|c|c|c|}
\hline
$\beta_0$ & $\%$ of censoring & $\hat{\beta}_{PL}$ & $\hat{\beta}_{KM}$ & $\betapar$ & $E[\beta_0(T)]$\\
\hline
1 & 0\% & 1.000 (0.117) & 1.000 (0.117) & 1.000 (0.117) & 1 \\
 & 50\% & 0.996 (0.115) & 1.001 (0.182) & 1.000 (0.186) & 1\\
\hline
0.5 & 0\% & 0.498 (0.103) & 0.498 (0.103) & 0.498 (0.103) & 0.5\\
 & 50\% & 0.503 (0.100) & 0.505 (0.184) & 0.506 (0.190) & 0.5\\
\hline 
\end{tabular}
\end{table}

The results in Table \ref{tab1} indicate that the three estimators are performing well under proportional hazards models. Moreover $\hat{\beta}_{KM}$ and $\betapar$ are less efficient than $\hat{\beta}_{PL}$ as expected.

\begin{table}[!h]
\centering
\caption{Comparison of $\hat{\beta}_{PL}$, $\hat{\beta}_{KM}$ and $\betapar$ under non-proportional hazards model. $C \sim \mathcal{U}[0,t_c]$. Standard errors in parenthesis.}
\label{tab2}
\begin{tabular}{|c|c|c|c|c|c|c|c|}
\hline
$\beta_1$ & $\beta_2$ & $t_0$ & $\%$ of censoring & $\hat{\beta}_{PL}$ & $\hat{\beta}_{KM}$ & $\betapar$ & $\mathbb{E}[\beta_0(T)]$\\
\hline
1 & 0 & 0.2 & 0\% & 0.330 (0.096) & 0.331 (0.096) & 0.330 (0.096) & 0.330 \\
 &  &  & 17\% & 0.373 (0.092) & 0.330 (0.101) & 0.329 (0.103) & 0.330 \\
 &  &  & 32\% & 0.418 (0.094) & 0.351 (0.122) & 0.348 (0.126) & 0.330 \\
 &  &  & 50\% & 0.512 (0.094) & 0.438 (0.159) & 0.437 (0.164) & 0.330 \\
\hline
3 & 0 & 0.2 & 0\% & 0.981 (0.110) & 0.984 (0.110) & 0.981 (0.112) & 0.989\\
 &  &  & 17\% & 1.123 (0.111) & 1.003 (0.119) & 0.999 (0.122) & 0.989\\
 &  &  & 32\% & 1.268 (0.116) & 1.073 (0.142) & 1.067 (0.149) & 0.989\\
 &  &  & 50\% & 1.569 (0.123) & 1.342 (0.196) & 1.337 (0.204) &  0.989\\
\hline 
\end{tabular}
\end{table}

\begin{table}[!h]
\centering
\caption{Comparison of $\hat{\beta}_{PL}$, $\hat{\beta}_{KM}$ and $\betapar$ under non-proportional hazards model. $C \sim \mathcal{E}(t_c)$. Standard errors in parenthesis.}
\label{tab3}
\begin{tabular}{|c|c|c|c|c|c|c|c|}
\hline
$\beta_1$ & $\beta_2$ & $t_0$ & $\%$ of censoring & $\hat{\beta}_{PL}$ & $\hat{\beta}_{KM}$ & $\betapar$ & $\mathbb{E}[\beta_0(T)]$\\
\hline
1 & 0 & 0.2 & 0\% & 0.331 (0.090) & 0.332 (0.090) & 0.331 (0.091) & 0.330 \\
 &  &  & 17\% & 0.382 (0.090) & 0.332 (0.104) & 0.331 (0.105) & 0.330 \\
 &  &  & 32\% & 0.444 (0.090) & 0.340 (0.112) & 0.337 (0.113) & 0.330 \\
 &  &  & 50\% & 0.549 (0.091) & 0.342 (0.161) & 0.338 (0.164) & 0.330 \\
\hline
3 & 0 & 0.2 & 0\% & 0.983 (0.113) & 0.986 (0.113) & 0.984 (0.116) & 0.989\\
 &  &  & 17\% & 1.135 (0.114) & 0.993 (0.124) & 0.989 (0.127) & 0.989\\
 &  &  & 32\% & 1.369 (0.121) & 1.061 (0.145) & 1.054 (0.148) & 0.989\\
 &  &  & 50\% & 1.769 (0.128) & 1.197 (0.197) & 1.186 (0.207) &  0.989\\
\hline 
\end{tabular}
\end{table}

Results for non-proportional hazards models are given in Table \ref{tab2} and Table \ref{tab3}. We find that $\hat{\beta}_{PL}$ depends on the censoring under the non-proportional hazards model because its value strongly varies when we increase the censoring rate. We can also notice that the estimators $\hat{\beta}_{KM}$ and $\betapar$ are consistent, even under non-proportional hazards model, whatever the percentage of censoring is. This was expected in light of Property \ref{Xu96} and Theorem \ref{NPH_bien_spe}.

Figure \ref{fig1} presents a comparison between the parametric estimator of the survival function presented in the previous section and the Kaplan-Meier estimator. The true survival function $S$ of $T$ is also drawn for more clarity. The chosen parameter $\beta_0$ for these simulations is $\beta_0= I(t<0.1)$, for all $t \in [0, \tau]$. The sample size is $n=100$. The distribution of $T$ is an exponential of parameter 2 for the first figure, and a Weibull of parameters 2 and 3 for the second one. The third distribution, denoted by $E_3$ is a piecewise exponential with parameters 0.25 on $[0,1[$, 1 on $[1,2[$ and 0.25 on $[2, + \infty[$. The final distribution, denoted by $E_4$ is also a piecewise exponential with parameters 0.25 on $[0,1[$, 1 on $[1,2[$, 2 on $[2,3[$ and 0.25 on $[3,+ \infty[$. In all cases, the percentage of censoring is 30\%. The parameters of these distributions are estimated by maximization of the likelihood.

\begin{figure}[!h]
\centering
\includegraphics[scale=0.6]{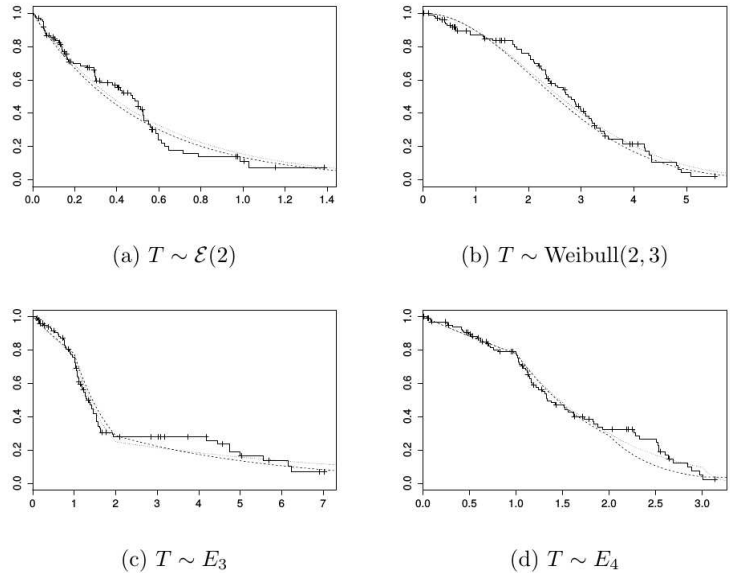}
\caption{Comparison between the Kaplan-Meier estimator (solid line), the parametric estimator (dotted line) of the survival function and the true survival curve (dashed line).}
\label{fig1}
\end{figure}

We can see that the parametric estimator is smoother than the Kaplan-Meier estimator, smoothing the increasingly large jumps for the large observed survival times.

We carried out some other simulations to compare the precisions of the parametric estimation and the Kaplan-Meier estimator under different $\beta_0$, different percentages of censoring and different sample sizes. It seems that the estimators $\hat{\beta}_{KM}$ and $\betapar$ have a similar precision for a given model and sample size. Therefore it can be interesting to use one estimator or another depending on the study. For example, we can give priority to $\betapar$ in case of supervised data.

\section{Relative efficiency under proportional hazards}\label{Rel-eff}

In Section \ref{LS_prop}, we established that, under proportional hazards models, $\betapar$ is consistent for $\beta_0$, which is a constant ; so is $\hat{\beta}_{PL}$. Then a natural question is to find which one of these two estimators is the most efficient. We already know that $\hat{\beta}_{PL}$ is the most efficient estimator under the Cox's model \citep{efron1977}. The question that remains to be addressed is how less efficient $\betapar$ is.

The results of \citet{lin1991} lead to the asymptotic relative efficiency of $\betapar$ to $\hat{\beta}_{PL}$:
\[
R_{eff}(\betapar,\hat{\beta}_{PL}) = \frac{\left( \Sigma_1 \right)^2}{\Sigma_0 \Sigma_2},
\]
where
\[
\Sigma_0= \int_0^\infty v(\beta_0,t) s^{(0)}(\beta_0,t)dt , \ \Sigma_1 = \int_0^\infty v(\beta_0,t) dF(t) , \ \Sigma_2 = \int_0^\infty v(\beta_0,t) \frac{S(t)}{s^{(0)}(0,t)}dF(t).
\]
To illustrate this, we study the following case: the baseline hazard is constant equal to 1, $Z$ has a Bernoulli distribution with parameter $p$ and $C$ has a lognormal distribution with parameters 0 and $t_c$. The latter distribution is chosen to ensure the convergence of the integrals $\Sigma_i$. We have then
\[
\Sigma_0 = \int_0^{\infty} A(\beta_0,t) \text{P}(C \geq t)dt,\ \Sigma_1 = \int_0^{\infty} A(\beta_0,t)dt, \ \Sigma_2 = \int_0^{\infty} \frac{A(\beta_0,t)}{\text{P}(C \geq t)}dt,
\]
where
\[
A(\beta,t)=\frac{(1-p)e^{-t} pe^\beta \exp(-te^\beta)}{(1-p)e^{-t}+ pe^\beta \exp(-te^\beta)}.
\]
Results for several values of $p$, $\beta_0$ and $t_c$ are presented in Table \ref{tab4}.

\begin{table}[!h]
\centering
\caption{Asymptotic relative efficiency of $\betapar$ to $\hat{\beta}_{PL}$ under proportional hazards. Percentage of censoring in parenthesis.}
\begin{tabular}{|c|c|c|c|c|}
\hline
$\beta_0$ & $t_c$ & $p=0.25$ & $p=0.5$ & $p=0.75$\\
\hline
0.5 & 1 & 0.797 (35\%) & 0.772 (32\%) & 0.736 (29\%) \\
\hline
1 &  & 0.911 (33\%) & 0.892 (27\%) &  0.863 (21\%) \\
\hline
2 &  & 0.990 (30\%) & 0.986 (21\%) &  0.979 (12\%) \\
\hline
0.5 & 0.5 & 0.192 (35\%) & 0.150 (32\%) & 0.100 (29\%) \\
\hline
1 &  & 0.746 (33\%) & 0.675 (27\%) &  0.564 (21\%) \\
\hline
2 &  & 0.996 (30\%) & 0.993 (21\%) &  0.988 (12\%) \\
\hline 
\end{tabular}
\label{tab4}
\end{table}

We can notice that the asymptotic relative efficiency of $\betapar$ to $\hat{\beta}_{PL}$ can be poor for a heavy censoring mechanism (fourth row of Table \ref{tab4}). Moreover, the larger the value of $\beta_0$, the larger the asymptotic relative efficiency. To convince ourselves of this, note that, when $\vert \beta_0 \vert \to \infty$, $A(\beta_0,t) \to 0$ and so $R_{eff}(\betapar,\hat{\beta}_{PL}) \to 1$.

\section{An illustration of the variance of $\betapar$ on the Freireich dataset}

A natural question at this stage is to determine the variance of $\betapar$ and compare it to the variance of $\hat{\beta}_{PL}$. In order to visualize this comparison on a real case, we use the leukemia data of \citet{freireich1963}. We obtain $\hat{\beta}_{PL} \approx 1.56$ and $\betapar \approx 1.59$. We generated 1000 resampling estimators by introducing random weights in the estimating equations. To be more precise, we generated 500 $n_t$-samples with an exponential distribution of parameter 1, where $n_t$ is the number of failure times in the data. We denote one of them by $(e_1, \cdots, e_{n_t})$. Then we put weights of the form $e_i/\sum_{j=1}^{n_t} e_j$ in the estimating equations (\ref{eq_estim_PL}) and (\ref{eq_estimW}). We obtained two histograms presented in Figure \ref{fig2}. We also draw the theoretical asymptotic distributions of $\hat{\beta}_{PL}$ and $\betapar$ : a gaussian distribution with mean $\hat{\beta}_{PL}$ and variance given by \citet{andersen-gill1982} for $\hat{\beta}_{PL}$, and a gaussian distribution with mean $\betapar$ and variance given by \citet{lin1991} for $\betapar$.  This latter variance can be consistently estimated by $A (\betapar )^{-1} B ( \betapar ) A ( \betapar )^{-1}$, where 
\[
A(\beta)= \frac{1}{n} \sum_{i=1}^n \Delta_i W_{\text{par}}(X_i) V(\beta, X_i) \mbox{ and } 
B(\beta)= \frac{1}{n} \sum_{i=1}^n \Delta_i W_{\text{par}}^2(X_i) V(\beta, X_i).
\]

\begin{figure}[!h]
\centering
\includegraphics[scale=0.6]{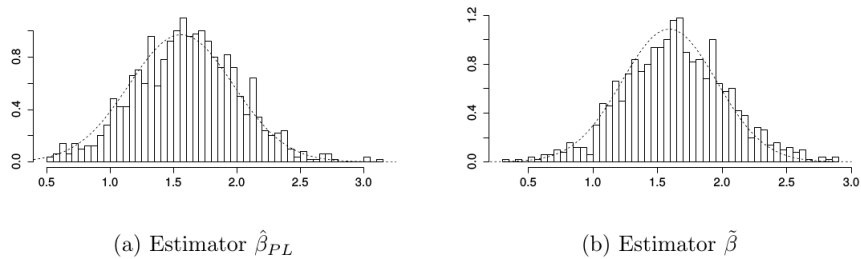}
\caption{Comparison between the histograms of 1000 resampling estimators and the theoretical gaussian distribution (dotted line).}
\label{fig2}
\end{figure}

The estimated standard error of $\hat{\beta}_{PL}$ is 0.42 and the estimated standard error of $\betapar$ is 0.36. We can see, in this configuration, that $\betapar$ and $\hat{\beta}_{PL}$ have very close empirical and theoretical distributions. These data are generally taken to satisfy the Cox proportional hazards model and it is known that $\hat{\beta}_{PL}$ is the best estimator of the true regression parameter $\beta_0$ under such a model. This result shows the value of using $\betapar$ under a proportional hazards model and the validity of its use under a non-proportional hazards model has been shown in Section \ref{LS_prop}.







\section{An application to relative survival data}\label{Appli}

In this section, we apply the method of Section \ref{Inference} on acute myocardial infarction data collected at the University Clinical Center in Ljubljana, Slovenia. The data are included in the R package \texttt{relsurv} in the table \texttt{rdata} developped by \citet{pohar-stare2006}. We use the Slovenian population tables contained in the table \texttt{slopop} to estimate the distribution of $T$. The empirical distribution of $T$ is presented in Figure \ref{fig3}.  We choose a piecewise exponential with one changepoint at 1461 days. 

\begin{figure}[!h]
\centering
\includegraphics[scale=0.3]{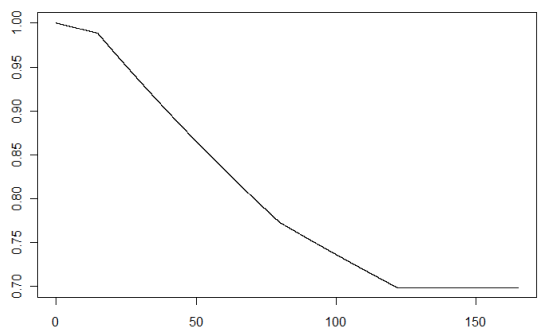}
\caption{Empirical distribution of $T$ obtained by Slovenian population tables (solid line). Parametric estimator of $S(t)$ with a piecewise exponential (dashed line). Time is expressed in days from 1rst January 1982.}
\label{fig3}
\end{figure}

Then we computed $\betapar$ and $\hat{\beta}_{PL}$ for the following covariates : age, sex, year of diagnosis and age category ("under 54", "54-61", "62-70", "71-95"). We generated 500 bootstrap \citep{efron-tibshirani1994} samples based on the data of the University Clinical Center in Ljubljana in order to get estimations for the standard errors. The results are summarized in Table \ref{tab5}.

\begin{table}[!h]
\centering
\caption{Average effects estimated from acute myocardial infarction data. Standard errors in parenthesis.}
\label{tab5}
\begin{tabular}{|c|c|c|c|c|c|}
\hline
 & Age & Sex & Year & Age category \\
\hline
$\betapar$ & 0.046 (0.012) & -0.069 (0.358) & 0.005 (0.001) & 0.444 (0.144)  \\
$\hat{\beta}_{PL}$ & 0.061 (0.004) & 0.530 (0.089) & 0.000 (0.0001) & 0.573 (0.044)\\
\hline
\end{tabular}
\end{table}

Alternatively, we could assume that $\beta_0$ is piecewise constant and to estimate it on each piece. For this, we can use changepoint methods, although this is not the focus of this article. Details are given in \citet{xu-adak2002}. What we get here are the average effects of the different studied covariables.

\section{Discussion}

In this paper, we study the effects of knowing the marginal distribution of $T$ on the estimation of $\beta^ *$, a summary statistic for $\beta_0(.)$. This extra knowledge does not help us increase efficiency in the case of proportional hazards. On the other hand, in the presence on non-proportional hazards, the estimator converges in probability to a population counterpart that has strong interpretation as average effect. \citet{xu-oquigley2000} obtained a similar result, under certain conditions, using the Kaplan-Meier estimator. It is useful to know that the result of \citet{xu-oquigley2000} extends to the estimation where the marginal survival can be modeled parametrically. Note that this is much weaker than assuming parametric models for the whole family of conditional distributions given the covariates. Indeed, even when the model for marginal survival is misspecified, we still obtain consistent estimators of the regression coefficient, as long as the proportional hazards assumption continues to hold. We have not studied the impact of departures from the proportional hazards assumption in conjunction with a misspecified marginal model for survival. This may be worthy of further study.


\section*{Appendix}

We state the following Lemma which is a straightforward probabilist version of the implicit function theorem that we can find in \citet{schwartz1992} for instance.

\begin{lemma}\label{TFI_prob}{Probabilist implicit function theorem}\\
Let $\U$ be an open set of $\R^m \times \R^p$, $k \in \N^*$ and $X$ an almost surely $\mathcal{C}^k$-process defined on $\U$ with values in $\R^p$. Let $(a,b) \in \R^m \times \R^p$ such that $X(a,b)=0$ and $D_tX(a,b)$ is invertible almost surely.

Then there exists a neighbourhood $\V$ of $a$ in $\R^m$, a neighbourhood $\W$ of $b$ in $\R^p$ and a process $\phi : \V \longrightarrow \W$ almost surely $\mathcal{C}^k$ such that $\V \times \W \subset \U$ and
\begin{equation*}
\forall s \in \V, \ \forall t \in \W, \ \text{almost surely}, \ X(s,t)=0 \Leftrightarrow t= \phi(s)
\end{equation*}
Furthermore, almost surely,
\begin{equation}\label{diff-fct-imp}
\forall s \in \V, \ d\phi(s)=-D_tX(s,\phi(s))^{-1} \circ D_sX(s,\phi(s)).
\end{equation}
\end{lemma}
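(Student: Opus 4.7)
The plan is to reduce the statement to the classical deterministic implicit function theorem (as stated in Schwartz 1992) by working pointwise on an almost-sure event where all the regularity hypotheses hold. First I would identify the event
\[
\Omega_0 = \{\omega \in \Omega : X(\cdot,\cdot,\omega) \text{ is } \mathcal{C}^k \text{ on } \U \text{ and } D_tX(a,b,\omega) \text{ is invertible}\},
\]
which by hypothesis satisfies $P(\Omega_0) = 1$. On $\Omega_0$ the deterministic map $(s,t) \mapsto X(s,t,\omega)$ satisfies every assumption of the classical implicit function theorem at the point $(a,b)$, so the classical theorem delivers (random) open neighbourhoods $\V(\omega) \ni a$ and $\W(\omega) \ni b$ with $\V(\omega)\times \W(\omega) \subset \U$, together with a $\mathcal{C}^k$ map $\phi(\cdot,\omega):\V(\omega) \to \W(\omega)$ such that, for $(s,t) \in \V(\omega)\times \W(\omega)$, one has $X(s,t,\omega)=0$ if and only if $t=\phi(s,\omega)$. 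On the complement of $\Omega_0$, which is a null set, extend $\phi$ arbitrarily (say $\phi\equiv b$); this does not affect any almost sure statement.

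Next I would handle the measurability required to call $\phi$ a process. The standard proof of the classical implicit function theorem constructs $\phi$ as a limit of iterates $\phi_{n+1}(s,\omega) = \phi_n(s,\omega) - D_tX(a,b,\omega)^{-1}X(s,\phi_n(s,\omega),\omega)$ of a contraction defined from $X$ and $D_tX(a,b,\omega)^{-1}$. Since inversion of invertible matrices is measurable and compositions/limits of measurable maps are measurable, the iterates, hence $\phi$, are jointly measurable in $(s,\omega)$. Hence $\phi$ is a bona fide process and its almost sure $\mathcal{C}^k$-regularity follows from the deterministic conclusion applied on $\Omega_0$.

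Finally, I would derive the differential formula. On $\Omega_0$, the identity $X(s,\phi(s,\omega),\omega) = 0$ holds for all $s \in \V(\omega)$. Differentiating in $s$ via the chain rule for $\mathcal{C}^k$ maps of real variables gives
\[
D_sX(s,\phi(s,\omega),\omega) + D_tX(s,\phi(s,\omega),\omega)\circ d\phi(s,\omega) = 0.
\]
Since $D_tX(a,b,\omega)$ is invertible and $D_tX$ is continuous in $(s,t)$, the operator $D_tX(s,\phi(s,\omega),\omega)$ is invertible on a (possibly shrunken) neighbourhood of $a$, still denoted $\V(\omega)$. Solving yields the desired identity \eqref{diff-fct-imp} almost surely.

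The main obstacle, if any, is purely bookkeeping: the neighbourhoods $\V$ and $\W$ produced by this argument are random, and one must either accept this (which is what the statement seems to allow, since the quantifier ``almost surely'' appears inside the conclusion) or, if a deterministic pair $(\V,\W)$ is desired, restrict to $\omega$ for which the deterministic neighbourhoods contain some prescribed balls around $a$ and $b$, and verify this happens on a set of probability arbitrarily close to one. The measurability step is routine given the constructive proof of the classical theorem; the algebra of the derivative formula is immediate.
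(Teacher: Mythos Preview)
The paper does not actually prove this lemma: it merely states it in the Appendix and remarks that it ``is a straightforward probabilist version of the implicit function theorem that we can find in \citet{schwartz1992}.'' So there is no paper proof to compare against beyond that one-line reduction to the classical deterministic result.

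Your proposal is exactly the natural way to flesh out that one-line reduction: restrict to the almost-sure event where the deterministic hypotheses hold, apply the classical theorem pointwise in $\omega$, and then check measurability via the constructive (contraction-mapping) proof of the classical result. The derivative formula follows by the same chain-rule computation as in the deterministic case. In that sense your approach is the same as the paper's, only made explicit.

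You correctly flag the one genuine subtlety the paper glosses over: the neighbourhoods $\V(\omega)$ and $\W(\omega)$ produced by the pointwise argument are a priori random, whereas the lemma as stated seems to promise deterministic $\V$ and $\W$. The paper does not address this, and for the use made of the lemma in Section~\ref{LS_prop} (deriving \eqref{deriv-beta} for the process $\beta_n(\theta)$ on a compact $\Theta$) the random-neighbourhood version already suffices. So your caveat is accurate, and your proposed workaround (either accept random neighbourhoods, which is all the application needs, or intersect over $\omega$ at the cost of shrinking to an event of probability close to one) is the right way to handle it. Nothing further is required.
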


\bibliographystyle{natbib} 
\bibliography{biblio}

\begin{thebibliography}{27}
\providecommand{\natexlab}[1]{#1}
\providecommand{\url}[1]{\texttt{#1}}
\expandafter\ifx\csname urlstyle\endcsname\relax
  \providecommand{\doi}[1]{doi: #1}\else
  \providecommand{\doi}{doi: \begingroup \urlstyle{rm}\Url}\fi

\bibitem[Andersen and Gill(1982)]{andersen-gill1982}
Andersen, P.~K. and Gill, R.~D.
\newblock (1982).
\newblock Cox's regression model for counting processes: A large sample study.
\newblock \emph{The Annals of Statistics}, 10\penalty0 (4):\penalty0 pp.
  1100--1120.

\bibitem[Cox(1972)]{cox1972}
Cox, D.~R.
\newblock (1972).
\newblock Regression models and life-tables.
\newblock \emph{Journal of the Royal Statistical Society. Series B
  (Methodological)}, 34\penalty0 (2):\penalty0 pp. 187--220.

\bibitem[Efron(1977)]{efron1977}
Efron, B.
\newblock (1977).
\newblock The efficiency of cox's likelihood function for censored data.
\newblock \emph{Journal of the American Statistical Association}, 72\penalty0
  (359):\penalty0 557--565.

\bibitem[Efron and Tibshirani(1994)]{efron-tibshirani1994}
Efron, B. and Tibshirani, R.J.
\newblock \emph{An Introduction to the Bootstrap}.
\newblock Chapman \& Hall/CRC Monographs on Statistics \& Applied Probability.
  Taylor \& Francis, (1994).
\newblock ISBN 9780412042317.

\bibitem[Freireich et~al.(1963)Freireich, Gehan, Frei, Schroeder, Wolman,
  Anbari, Burgert, Mills, Pinkel, Selawry, et~al.]{freireich1963}
Freireich, E.~J., E.~Gehan, E.~Frei, L.~R. Schroeder, I.~J. Wolman, R.~Anbari,
  E.~O. Burgert, S.~D. Mills, D.~Pinkel, O.~S. Selawry, and others, .
\newblock (1963).
\newblock The effect of 6-mercaptopurine on the duration of steroid-induced
  remissions in acute leukemia: A model for evaluation of other potentially
  useful therapy.
\newblock \emph{Blood}, 21\penalty0 (6):\penalty0 699--716.

\bibitem[Gehan(1965)]{gehan1965}
Gehan, E.~A.
\newblock (1965).
\newblock A generalized wilcoxon test for comparing arbitrarily singly-censored
  samples.
\newblock \emph{Biometrika}, 52\penalty0 (1-2):\penalty0 203--223.

\bibitem[Gray(1992)]{gray1992}
Gray, R.~J.
\newblock (1992).
\newblock Flexible methods for analyzing survival data using splines, with
  applications to breast cancer prognosis.
\newblock \emph{Journal of the American Statistical Association}, 87\penalty0
  (420):\penalty0 942--951.

\bibitem[Harrington and Fleming(1982)]{harrington-fleming1982}
Harrington, D.~P. and Fleming, T.~R.
\newblock (1982).
\newblock A class of rank test procedures for censored survival data.
\newblock \emph{Biometrika}, 69\penalty0 (3):\penalty0 553--566.

\bibitem[Hastie and Tibshirani(1993)]{hastie-tibshirani1993}
Hastie, T. and Tibshirani, R.
\newblock (1993).
\newblock Varying-coefficient models.
\newblock \emph{Journal of the Royal Statistical Society. Series B
  (Methodological)}, 55\penalty0 (4):\penalty0 757--796.

\bibitem[Kaplan and Meier(1958)]{kaplan-meier1958}
Kaplan, E.~L. and Meier, Paul.
\newblock (1958).
\newblock Nonparametric estimation from incomplete observations.
\newblock \emph{Journal of the American Statistical Association}, 53\penalty0
  (282):\penalty0 pp. 457--481.

\bibitem[Lausen and Schumacher(1996)]{lausen-schumacher1996}
Lausen, B. and Schumacher, M.
\newblock (1996).
\newblock Evaluating the effect of optimized cutoff values in the assessment of
  prognostic factors.
\newblock \emph{Comput. Stat. Data Anal.}, 21\penalty0 (3):\penalty0 307--326.

\bibitem[Liang et~al.(1990)Liang, Self, and Liu]{liang-self-liu1990}
Liang, K.~Y., S.~G. Self, and Liu, X.
\newblock (1990).
\newblock The cox proportional hazards model with change point: An
  epidemiologic application.
\newblock \emph{Biometrics}, 46\penalty0 (3):\penalty0 783--793.

\bibitem[Lin(1991)]{lin1991}
Lin, D.~Y.
\newblock (1991).
\newblock Goodness-of-fit analysis for the cox regression model based on a
  class of parameter estimators.
\newblock \emph{Journal of the American Statistical Association}, 86\penalty0
  (415):\penalty0 pp. 725--728.

\bibitem[Marzec and Marzec(1997)]{marzec-marzec1997}
Marzec, L. and Marzec, P.
\newblock (1997).
\newblock On fitting cox's regression model with time-dependent coefficients.
\newblock \emph{Biometrika}, 84\penalty0 (4):\penalty0 901--908.

\bibitem[Moreau et~al.(1985)Moreau, O'Quigley, and
  Mesbah]{moreau-oquigley-mesbah1985}
Moreau, T., J.~O'Quigley, and Mesbah, M.
\newblock (1985).
\newblock A global goodness-of-fit statistic for the proportional hazards
  model.
\newblock \emph{Journal of the Royal Statistical Society. Series C (Applied
  Statistics)}, 34\penalty0 (3):\penalty0 212--218.

\bibitem[Murphy and Sen(1991)]{murphy-sen1991}
Murphy, S.~A. and Sen, P.~K.
\newblock (1991).
\newblock Time-dependent coefficients in a cox-type regression model.
\newblock \emph{Stochastic Processes and their Applications}, 39\penalty0
  (1):\penalty0 153 -- 180.

\bibitem[O'Quigley and Pessione(1989)]{oquigley-pessione1989}
O'Quigley, J. and Pessione, F.
\newblock (1989).
\newblock Score tests for homogeneity of regression effect in the proportional
  hazards model.
\newblock \emph{Biometrics}, 45\penalty0 (1):\penalty0 135--144.

\bibitem[O'Quigley and Pessione(1991)]{oquigley-pessione1991}
O'Quigley, J. and Pessione, F.
\newblock (1991).
\newblock The problem of a covariate-time qualitative interaction in a survival
  study.
\newblock \emph{Biometrics}, 47\penalty0 (1):\penalty0 101—115.

\bibitem[Pohar and Stare(2006)]{pohar-stare2006}
Pohar, M. and Stare, J.
\newblock (2006).
\newblock Relative survival analysis in r.
\newblock \emph{Computer methods and programs in biomedicine}, 81\penalty0
  (3):\penalty0 272--278.

\bibitem[Schwartz(1992)]{schwartz1992}
Schwartz, L.
\newblock \emph{Analyse. 2. Calcul diff{\'e}rentiel et {\'e}quations
  diff{\'e}rentielles}.
\newblock Hermann, (1992).

\bibitem[Struthers and Kalbfleisch(1986)]{struthers-kalbfleisch1986}
Struthers, C.~A. and Kalbfleisch, J.~D.
\newblock (1986).
\newblock Misspecified proportional hazard models.
\newblock \emph{Biometrika}, 73\penalty0 (2):\penalty0 pp. 363--369.

\bibitem[Stute and Wang(1993)]{stute-wang1993}
Stute, W. and Wang, J.-L.
\newblock (1993).
\newblock The strong law under random censorship.
\newblock \emph{Ann. Statist.}, 21\penalty0 (3):\penalty0 1591--1607.

\bibitem[Verweij and Houwelingen(1995)]{verweij-houwelingen1995}
Verweij, P. and Houwelingen, H.~Van.
\newblock (1995).
\newblock Time-dependent effects of fixed covariates in cox regression.
\newblock \emph{Biometrics}, 51\penalty0 (4):\penalty0 1550--1556.

\bibitem[Xu and Adak(2002)]{xu-adak2002}
Xu, R. and Adak, S.
\newblock (2002).
\newblock Survival analysis with time-varying regression effects using a
  tree-based approach.
\newblock \emph{Biometrics}, 58\penalty0 (2):\penalty0 305--315.
\newblock ISSN 1541-0420.

\bibitem[Xu and Harrington(2001)]{xu-harrington2001}
Xu, R. and Harrington, D.~P.
\newblock (2001).
\newblock A semiparametric estimate of treatment effects with censored data.
\newblock \emph{Biometrics}, 57\penalty0 (3):\penalty0 875--885.

\bibitem[Xu and O'Quigley(2000)]{xu-oquigley2000}
Xu, R. and O'Quigley, J.
\newblock (2000).
\newblock Estimating average regression effect under non-proportional hazards.
\newblock \emph{Biostatistics}, 1\penalty0 (4):\penalty0 423--439.

\bibitem[Zucker and Karr(1990)]{zucker-karr1990}
Zucker, D.~M. and Karr, A.~F.
\newblock (1990).
\newblock Nonparametric survival analysis with time-dependent covariate
  effects: a penalized partial likelihood approach.
\newblock \emph{Ann. Statist.}, 18\penalty0 (1):\penalty0 329--353.

\end{thebibliography}



\end{document}